\newtheorem{Lemma}{Lemma}
\newtheorem{lemma}[Lemma]{$\mathbf{Lemma}$}
\begin{document}
\title{ \huge{  Joint Power and Time Allocation for NOMA-MEC Offloading  }}

\author{ Zhiguo Ding,  Jie Xu,     Octavia A. Dobre,  and H. Vincent Poor  \thanks{   

   Z. Ding and H. V. Poor are  with the Department of
Electrical Engineering, Princeton University, Princeton,
USA.  Z. Ding  is  also  with the School of
Electrical and Electronic Engineering, the University of Manchester, Manchester,  UK. J. Xu is with   School of Information Engineering,
Guangdong University of Technology,  China.  
O. A. Dobre is with the Department of Electrical and Computer Engineering, Memorial University, 
St. John's, NL, Canada.
  
  }\vspace{-2em}} \maketitle

\begin{abstract} This paper considers   non-orthogonal multiple access (NOMA) assisted  mobile edge computing (MEC), where  the power and time allocation is jointly optimized to reduce  the energy consumption of  offloading. Closed-form expressions for the optimal power and time allocation solutions are obtained and used to establish the conditions for determining whether conventional   orthogonal multiple access (OMA), pure NOMA or hybrid NOMA  should be used for MEC offloading.    
\vspace{-1em}
\end{abstract} 

\section{Introduction}
 Both non-orthogonal multiple access (NOMA) and mobile edge computing (MEC) have been recognized as    important   techniques in future wireless networks \cite{jsacnomaxmine,6553297}.  Sophisticated optimization frameworks developed in  \cite{8269088, 8267072}   show that     by applying NOMA to MEC, not only can severe   delay   be avoided, but also      energy consumption can be reduced, although the comparisons between NOMA and orthogonal multiple access (OMA) in \cite{8269088, 8267072} rely on simulation. Insightful analytical results   developed in   \cite{MECding} confirmed the advantages  of NOMA-MEC offloading, by using fixed bandwidth allocation.

This letter studies the impact of NOMA on energy-efficient  MEC offloading,  by focusing on the fundamental two-scheduled-user case in order to obtain an insightful understanding of  NOMA-MEC. The existing studies in  \cite{8269088, 8267072, MECding} consider  two  offloading strategies only,      OMA and pure NOMA (i.e.,  both the users offload all of their tasks at the same time). However, there is a third strategy, termed    hybrid NOMA   in this paper, i.e.,  a user  can first offload  parts of its task by using a time slot allocated to another user and then offload the remainder  of its task during a time slot solely occupied by itself.  The performance of the three   strategies is studied in this paper, where closed-form expressions for the optimal  time and power allocation solutions   are obtained, by applying geometric programming (GP).  These closed-form solutions not only  facilitate   low-complexity resource allocation, but also reveal  important  properties of NOMA-MEC offloading. For example, by using the obtained closed-form solutions,  hybrid-NOMA-MEC can be   proved to be superior to OMA-MEC when users have demanding latency requirements for their task offloading, whereas OMA-MEC is preferred if a user's task is delay tolerant. It is worth pointing out that the pure NOMA   strategy     is not preferred for either of the two situations. 

\section{System Model} 
Consider an   MEC offloading  scenario, in which $K$ users with different quality of service (QoS) requirements   communicate with one access point with an integrated MEC server.  Because of their limited computational capabilities,  it is assumed that the   users  choose to offload   their  computationally intensive, latency-critical, and inseparable tasks to the  server.

Each user's task is characterized by the parameter pair $\{N_k,   \beta_k\}$, $k=1,\ldots, K$, which is defined as follows:
\begin{itemize}
\item $N_k$ denotes the number of nats contained in a task;
\item $D_k$ denotes the computation deadline of a task. 
\end{itemize} 

Without loss of generality, assume that $N_k= N$, $1\leq k\leq K $, and the users are ordered according to their computation deadlines, i.e., $ D_1\leq  \cdots \leq D_K$. 
To reduce the system complexity, it is further assumed that the MEC server schedules only two users, user $m$ and user $n$, $m\leq n$,  to be served at the same resource block.  Note that scheduling two users to perform NOMA is also aligned with how NOMA is implemented in LTE-A \cite{3gpp1}. To better illustrate the benefit of NOMA, OMA-MEC is illustrated first. 
 
If OMA is used, each user is allocated a dedicated time slot for offloading\footnote{In this paper, the time   and the energy costs for the server to send the outcomes of the tasks to the users    are omitted, since the size of the outcomes is typically very small.   The energy consumption for the computation at the server is also omitted, as the server is not energy constrained.    }.  Since user $m$ has a more demanding deadline than user $n$, user $m$   is served first. Therefore the users'  transmit powers, denoted by $P^{\text{OMA}}_m$ and $P^{\text{OMA}}_n$, need to satisfy  $
  D_m\ln(1+P^{\text{OMA}}_m|h_m|^2)=N$ and $ (D_n-D_m)\ln(1+P^{\text{OMA}}_n|h_n|^2)=N$, respectively, 
where $h_i$ denotes user $i$'s channel gain, $i= m,n$.

By using the principle of NOMA,   the two users can offload their tasks simultaneously during $D_m$ to the server. It is important to point out that user $m$ experiences the same  performance as in OMA if its message is decoded at the second stage of successive interference cancelation (SIC) and    user $n$'s data rate during $D_m$ is constrained as $
R_n\leq \ln \left(1+\frac{P_{n,1}|h_n|^2}{P_m^{\text{OMA}}|h_m|^2+1}\right)$,
where $P_{n,1}$ denotes the power used by user $n$ during $D_m$. 
 
As pointed out in \cite{MECding}, user $n$ needs to consume  more energy in NOMA  than   in OMA if the user completely relies on $D_m$. Therefore, hybrid NOMA is considered, i.e., user $n$ shares $D_m$ with user $m$, and then continuously transmits for another time interval, denoted by $T_n$, after $D_m$.   Denote the power used by user $n$ during   $T_n$ by   $P_{n,2}$.  As user $m$ experiences the same as in OMA, we     focus only on user $n$'s performance in this letter.

\section{ NOMA-Assisted MEC Offloading}
The problem for minimizing the energy consumption of   NOMA-MEC offloading can be formulated as follows: 
\begin{subequations}\label{5}
\begin{eqnarray}\label{5a}&
\underset{T_n, P_{n,1},P_{n,2}}{\min}\quad& D_mP_{n,1}+T_nP_{n,2}\\ \label{5b}
&s.t. \quad& D_m\ln\left(1+\frac{P_{n,1}|h_n|^2}{P_m^{\text{OMA}}|h_m|^2+1}\right)\\\nonumber&&+T_n\ln\left(1+|h_n|^2P_{n,2}\right)\geq N\\ \label{5c} &&0\leq T_n\leq D_n-D_m\\ && P_{n,i}\geq 0, \forall i \in\{1,2\}.
\end{eqnarray}
\end{subequations}
The objective function \eqref{5a} denotes user $n$'s   energy consumption for   MEC offloading, \eqref{5b} denotes the rate constraint to ensure that user $n$'s $N$ nats are offloaded within $D_m+T_n$, and \eqref{5c} denotes the deadline constraint, i.e., $T_n+D_m\leq D_n$. It is worth noting that the benefit of using NOMA is obvious for the case of $D_n=D_m$, where the power required by the OMA case becomes infinite while the power in NOMA is finite.   

In the first two subsections of this section, we  will focus on the scenario where $D_n< 2D_m$, in order to avoid the trivial case with OMA solutions. 
In particular, we first obtain the optimal solutions for $P_{n,1}$ and $P_{n,2}$ as explicit functions of  $T_n$ by applying GP,  and then find the optimal solution of $T_n$. The scenario   $D_n\geq 2D_m$ is also  discussed at the end of this section. 
 
\subsection{Finding the Optimal Solutions for $P_{n,1}$ and $P_{n,2}$}
In order to make GP  applicable, the objective function and the constraints in \eqref{5} need to be transformed as follows. By using the fact $D_m\ln(1+P^{\text{OMA}}_m|h_m|^2)=N$, constraint \eqref{5b} can be simplified as follows: 
\begin{align}\label{rate constraint}
\ln\left(1+e^{-\frac{N}{D_m}}|h_n|^2P_{n,1}\right)^{D_m}\left(1+|h_n|^2P_{n,2}\right)^{T_n}\geq N. 
\end{align}

Define $x_1= 1+e^{-\frac{N}{D_m}}|h_n|^2P_{n,1}$ and $x_2=1+|h_n|^2P_{n,2}$. Problem  \eqref{5} is transformed   to the following equivalent form:   
\begin{subequations}\label{6}
\begin{eqnarray}\label{6a}&
\underset{T_n, x_1,x_2}{\min}\quad& D_me^{\frac{N}{D_m}} x_1  +T_n\left(x_2-1\right) \\ 
&s.t. \quad& e^{N}x_1^{-D_m}   x_2^{-T_n}\leq 1\\  &&0\leq T_n\leq D_n-D_m\\ && x_i\geq 1, \forall i \in\{1,2\}.
\end{eqnarray}
\end{subequations}

Define $y_i=\ln x_i$, $i=1,2$. By fixing $T_n$,   problem   \eqref{6} can be transformed  to the following equivalent form: 
\begin{subequations}\label{7}
\begin{eqnarray}\label{7a}&
\underset{ y_1,y_2}{\min}\quad& D_me^{\frac{N}{D_m}} e^{y_1}  +T_ne^{y_2} \\ 
&s.t. \quad&  e^{-D_my_1 -T_ny_2+N}\leq 1\\ && y_i\geq 0, \forall i \in\{1,2\}. 
\end{eqnarray}
\end{subequations}

By treating   problem \eqref{7} as a special case of GP and applying  logarithm to \eqref{7},   the Karush-Kuhn-Tucker (KKT) conditions can be applied to find the optimal solution as follows:
  \begin{eqnarray}
\left\{\begin{array}{rl}  \frac{D_me^{\frac{N}{D_m}} e^{y_1} }{D_me^{\frac{N}{D_m}} e^{y_1}  +T_ne^{y_2}} -\lambda_1-\lambda_3 D_m&=0 \\  
    \frac{T_ne^{y_2} }{D_me^{\frac{N}{D_m}} e^{y_1}  +T_ne^{y_2}} -\lambda_2-\lambda_3 T_n&=0
    \\ 
    N- D_my_1-T_ny_2&\leq 0 \\ 
     \lambda_3\left( -D_my_1-T_ny_2+N  \right)&=0 \\
     -y_i&\leq 0, \forall i \in\{1,2\}\\
     \lambda_iy_i&=0, \forall i \in\{1,2\}\\
     \lambda_i&\geq 0, \forall i \in\{1, 2,3\}
     \end{array}\right.\hspace{-1em},
\end{eqnarray}
where $\lambda_i$ are   Lagrange multipliers. The optimal solutions of $P_{n,1}$ and $P_{n,2}$ can be obtained as in the following lemma.  
\begin{lemma}\label{lemma1}
Assume $D_n<2D_m$. The optimal solutions for $P_{n,1}$ and $P_{n,2}$ in   problem \eqref{5} can be expressed as the following closed-form functions of $T_n$: 
  \begin{eqnarray}\label{eqlema}
\left\{\begin{array}{rl}  
P_{n,1}^* = &|h_n|^{-2}e^{\frac{N}{D_m}}\left(e^{\frac{N(D_m-T_n)}{D_m(D_m+T_n)}}-1\right)\\ 
P_{n,2}^* = & |h_n|^{-2}\left(e^{\frac{N(D_m-T_n)}{D_m(D_m+T_n)} +\frac{N}{D_m}}-1\right)
     \end{array}\right..
\end{eqnarray}
\end{lemma}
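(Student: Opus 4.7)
The plan is to solve the convex program \eqref{7} (for fixed $T_n$) through its KKT system. Since the objective is a positive sum of exponentials and the rate constraint is equivalent to the affine inequality $N - D_m y_1 - T_n y_2 \leq 0$, problem \eqref{7} is convex and the KKT conditions are both necessary and sufficient, so it suffices to exhibit multipliers and primal variables that satisfy them.

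First, I would argue that at the optimum the rate constraint \eqref{5b} is active, i.e.\ $\lambda_3>0$. If it were slack, one could strictly decrease either $y_1$ or $y_2$ while preserving feasibility, which would reduce the strictly increasing objective $D_m e^{N/D_m} e^{y_1}+T_n e^{y_2}$. Second, I would show that under the hypothesis $D_n<2D_m$ (so $T_n\leq D_n-D_m<D_m$) the optimum lies in the interior of the nonnegative orthant, so $y_1,y_2>0$ and hence $\lambda_1=\lambda_2=0$. A convenient way is to rule out the two boundary possibilities: if $y_1=0$, the constraint forces $y_2=N/T_n$, and comparing the resulting objective $T_n(e^{N/T_n}-1)$ against the candidate solution shows it is larger for $T_n<D_m$; symmetrically for $y_2=0$.

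With $\lambda_1=\lambda_2=0$ and $\lambda_3>0$, the first two stationarity equations read
\begin{equation*}
\frac{D_m e^{N/D_m}e^{y_1}}{D_m e^{N/D_m}e^{y_1}+T_n e^{y_2}}=\lambda_3 D_m,\qquad
\frac{T_n e^{y_2}}{D_m e^{N/D_m}e^{y_1}+T_n e^{y_2}}=\lambda_3 T_n.
\end{equation*}
Adding them yields $\lambda_3=\frac{1}{D_m+T_n}$, and substituting back gives the clean relation $e^{y_2}=e^{N/D_m+y_1}$, i.e.\ $y_2=y_1+\frac{N}{D_m}$. Plugging this into the active rate constraint $D_m y_1+T_n y_2=N$ produces a linear equation in $y_1$ whose solution is $y_1=\frac{N(D_m-T_n)}{D_m(D_m+T_n)}$, and therefore $y_2=\frac{N(D_m-T_n)}{D_m(D_m+T_n)}+\frac{N}{D_m}$. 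Finally, reversing the substitutions $x_i=e^{y_i}$, $x_1=1+e^{-N/D_m}|h_n|^2 P_{n,1}$, $x_2=1+|h_n|^2 P_{n,2}$, yields the two closed-form expressions in \eqref{eqlema}.

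The main obstacle I expect is the interior-optimum step: one has to exclude the boundary cases $y_1=0$ or $y_2=0$ using the assumption $D_n<2D_m$, since otherwise the KKT system admits multiple branches corresponding to pure OMA-like or pure NOMA-like allocations. Once this is established, everything else reduces to straightforward manipulation of the two stationarity equations and the activated rate constraint, and the expressions for $P_{n,1}^*$ and $P_{n,2}^*$ follow by direct back-substitution.
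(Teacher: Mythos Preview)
Your proposal is correct and follows essentially the same route as the paper: write the KKT system for the convex reformulation \eqref{7} with $T_n$ fixed, argue that $\lambda_3>0$, take $\lambda_1=\lambda_2=0$, solve the resulting linear system for $(y_1,y_2)$, and back-substitute to recover $P_{n,1}^*,P_{n,2}^*$. The paper then justifies the interior branch by enumerating the two boundary candidates ($y_1=0$ and $y_2=0$), computing the corresponding energies $E_{\text{OMA}}$ and $E_{\text{NOMA}}$, and comparing them to $E_{\text{H-NOMA}}$ via the auxiliary function $f_{T_n}$---exactly the step you flag as the ``main obstacle.'' However, since you explicitly invoke convexity and hence \emph{sufficiency} of KKT, that comparison is actually unnecessary in your framework: once you observe that the interior candidate has $y_1^*=\tfrac{N(D_m-T_n)}{D_m(D_m+T_n)}>0$ whenever $T_n<D_m$ (guaranteed by $D_n<2D_m$), it is a primal-feasible KKT point and therefore globally optimal without any further case analysis.
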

 \begin{proof}
 Please refer to the appendix. 
 \end{proof} 
 
 \subsection{Finding the Optimal Solution for $T_n$}\label{subsection t}
By substituting the optimal solution obtained in Lemma~\ref{lemma1} into problem \eqref{5},  the original problem  can be written in an equivalent form as follows:
\begin{align}\label{gtn}
\underset{T_n}{\min}\quad&  g_{T_n}\triangleq D_m \left(e^{y_1^*}- 1\right)e^{\frac{N}{D_m}}+T_n\left(e^{y_2^*}-1\right),\\\nonumber
s.t. \quad& T_n\leq D_n-D_m,
\end{align}
where $g_{T_n}$ is the  energy consumption normalized by omitting the constant $|h_n|^{-2}$   in the objective function \eqref{5a}.   Note that both $y_1^*$ and $y_2^*$ are functions of $T_n$ as defined in \eqref{13}.  

The derivative of $g_{T_n}$ with respect to $T_n$ can be expressed  as follows:
\begin{align}
\frac{ dg_{T_n}}{dT_n}=& D_me^{\frac{N}{D_m}}  e^{y_1^*} \frac{(-2N)}{(D_m+T_n)^2}+\left(e^{y_2^*}-1\right) \\\nonumber &+T_ne^{y_2^*}\frac{(-2N)}{(D_m+T_n)^2}.
\end{align}
Recall that $y_2^* = y_1^*+\frac{N}{D_m}$. Therefore, the derivative of $g_{T_n}$ can be rewritten as follows:
\begin{align}
\frac{ dg_{T_n}}{dT_n}=& D_m  e^{y_2^*} \frac{(-2N)}{(D_m+T_n)^2}+\left(e^{y_2^*}-1\right) \\\nonumber &+T_ne^{y_2^*}\frac{(-2N)}{(D_m+T_n)^2} \\\nonumber =&e^{y_2^*}\left(1-\frac{2N}{D_m+T_n}\right)-1.
\end{align}
Further, recall  that $y_2^*= \frac{N(D_m-T_n)}{D_m(D_m+T_n)} +\frac{N}{D_m}=\frac{2N}{D_m+T_n}$. Thus, the derivative of $g_{T_n}$ can be expressed as follows:
\begin{align}
\frac{ dg_{T_n}}{dT_n} =&g_x\left(\frac{2N}{D_m+T_n}\right),
\end{align}
where 
\begin{align}
g_x(x)\triangleq e^{x}\left(1-x\right)-1.
\end{align}
 $g_x(x)$ is a monotonically non-increasing  function since $\frac{dg_x(x)}{dx}=-xe^{-x}\leq 0$ for $x\geq 0$. Therefore, $\frac{ dg_{T_n}}{dT_n}\leq 0$ since 
\begin{align}
\frac{ dg_{T_n}}{dT_n} \leq g_x\left(0\right)=0,
\end{align}
which means that $g_{T_n}$ is   monotonically non-increasing. Hence,  
 the optimal solution of $T_n$ for problem \eqref{5} is given by
\begin{align}
T_n^* = D_n-D_m .
\end{align} 
 It is worth pointing out that $T_n^* <D_m$, since the case $D_n<2D_m$ is considered  in this subsection. 
  
\subsection{Remarks and Discussions }\label{remarks}
\subsubsection{For the superiority of NOMA over OMA} we can   show that OMA cannot outperform NOMA, as presented   in the following. The energy consumption gap       between NOMA-MEC and OMA-MEC is given by
\begin{align}
\Delta \triangleq & D_m \left(e^{y_1^*}- 1\right)e^{\frac{N}{D_m}}|h_n|^{-2}+T_n\left(e^{y_2^*}-1\right)|h_n|^{-2} \\\nonumber &- T_n  \left(e^{\frac{N}{T_n}}-1\right)|h_n|^{-2}.
\end{align}
By using \eqref{13}, the gap can be further expressed as follows: 
\begin{align}
|h_n|^{2}\Delta \triangleq & D_m e^{y_2^*}(D_m+T_n) - D_me^{\frac{N}{D_m}} - T_ne^{\frac{N}{T_n}}\\\nonumber =&e^{\frac{2N}{D_m+T_n}}(D_m+T_n) - D_me^{\frac{N}{D_m}} - T_ne^{\frac{N}{T_n}} =f_{T_n}(T_n). 
\end{align}
As shown in \eqref{l1},  $f_{T_n}(T_n)\leq 0$, which means that the use of NOMA outperforms or at least yields the same performance as OMA, under the condition   $D_n<2D_m$. 

\subsubsection{For the case $D_n\geq 2 D_m$}\label{subsectionx}  this case corresponds to a scenario in which user $n$ has less demanding latency requirements. Compared to the case $D_n<2D_m$, $T_n$ can be larger than $D_m$ for the case  $D_n\geq 2 D_m$,  since $T_n= D_n-D_m$.  In this case, OMA yields the best performance, as shown in the following. 
Since  the hybrid NOMA solutions in Lemma \ref{lemma1}  are feasible only if $T_n<D_m$ and the energy consumption of hybrid NOMA, i.e.,  $g_{T_n}$ in \eqref{gtn}, is a monotonically non-increasing function of $T_n$, $g_{T_n}$ is always strictly lower bounded by
\begin{align}\label{lower bound}
D_m |h_n|^{-2}\left(e^{\frac{N }{ D_m}  }-1\right).
\end{align}
On the other hand,   the lower bound in \eqref{lower bound} can be achieved  by OMA when   $D_n\geq 2D_m$, i.e., the solution obtained with $\lambda_1\neq0$, $\lambda_2=0$ and $T_n=D_m$, as shown in \eqref{l2=0}. In other words, when  $D_n\geq 2 D_m$, OMA requires  less energy consumption than hybrid NOMA. Furthermore, OMA can also  outperform pure NOMA since
\begin{align}\nonumber 
\frac{E_{\text{OMA}} -
E_{\text{NOMA}}}{ |h_n|^{-2}}   \underset{(a)}{\leq}&
 D_m \left(e^{\frac{N}{D_m}}-1\right)   - D_m e^{\frac{N}{D_m}}\left(e^{\frac{N}{D_m}}-1\right)   \\  \label{stepx2}&=-D_m\left(e^{\frac{N}{D_m}}-1\right)^2\leq 0,
\end{align}
where step (a) is due to   the fact that the minimal energy required by OMA is no less than that in \eqref{lower bound}. Therefore, it is concluded that   OMA outperforms hybrid NOMA and pure NOMA when $D_n\geq 2 D_m$. This conclusion  is reasonable, since a more relaxed deadline makes it possible to   use  only the  interference-free time slot ($D_n-D_m$) for offloading.

\section{Numerical Results}
In this section, the performance of the proposed NOMA-MEC scheme is evaluated  via  simulation results, where the normalized energy consumption in \eqref{gtn} is used.  As can be observed from  Fig. \ref{fig1}, the use of NOMA-MEC can yield a significant performance gain over OMA-MEC, particularly when $D_n$ is small. This is because   OMA-MEC  relies on the short period $(D_n-D_m)$ for offloading. Take $D_n\rightarrow D_m$ as an example. $(D_n-D_m)$ becomes close to zero, and hence the energy consumed by OMA-MEC becomes prohibitively large, as shown in the figure. On the other hand, NOMA-MEC uses not only $(D_n-D_m)$ but also $D_m$ for offloading, which makes the energy consumed by NOMA-MEC   more stable. 
  \begin{figure}[!htbp]\centering\vspace{-1em}
    \epsfig{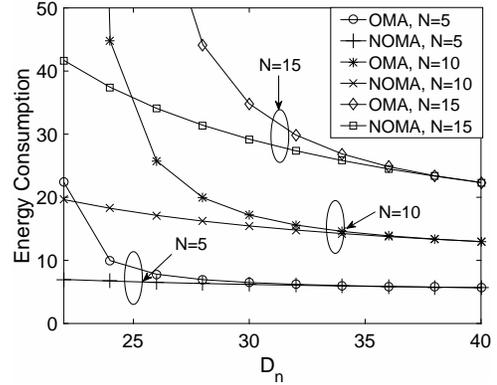}\vspace{-1em}
\caption{ Performance comparison between NOMA-MEC and OMA-MEC, where  $D_m=20$. \vspace{-0.2em} }\label{fig1}
\end{figure}
  \begin{figure}[!htbp]\centering\vspace{-1em}
    \epsfig{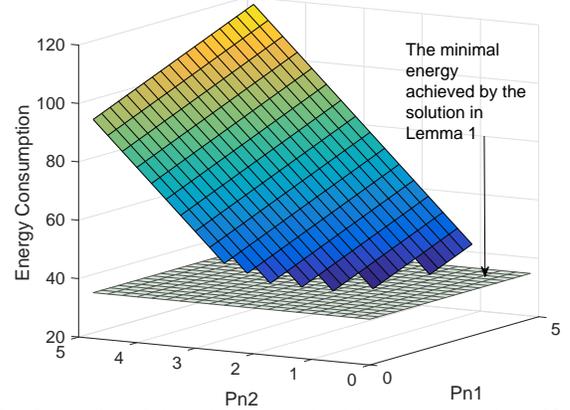}\vspace{-1em}
\caption{Optimality of the solutions obtained in Lemma \ref{lemma1}, where $N=15$, $D_m=20$, and $T_n=\frac{1}{4}D_m$. \vspace{-0.2em} }\label{fig2}
\end{figure}
  \begin{figure}[!htbp]\centering\vspace{-1em}
    \epsfig{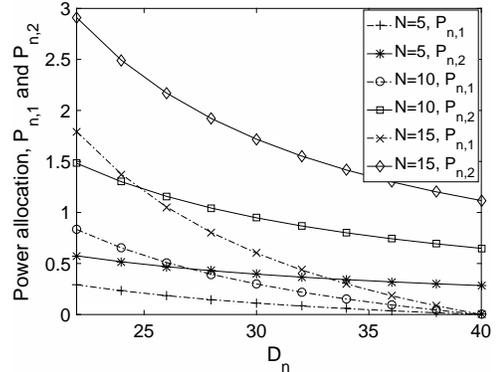}\vspace{-1em}
\caption{The performance of NOMA-MEC, where   $D_m=20$. \vspace{-1em} }\label{fig3}
\end{figure}

To better illustrate the optimality of the solutions obtained in Lemma \ref{lemma1}, the energy consumption is shown  as a function of different choices of $(P_{n,1}, P_{n,2})$ in Fig. \ref{fig2}.   The figure clearly demonstrates that among all the possible power allocation choices, the one provided in Lemma \ref{lemma1} yields the lowest energy consumption. As discussed in Section \ref{remarks}, the performance of NOMA and OMA becomes quite similar when   $D_n$ becomes large, which is  confirmed by Fig. \ref{fig1}, while further  details about this aspect are provided  in Fig. \ref{fig3}. As can be seen from this figure, when $D_n$ increases,  the power allocated to $D_m$ approaches   zero, which means that hybrid NOMA is degraded relative to OMA, as pointed out in Section \ref{remarks}. 
\vspace{-1em}
\section{Conclusions}
 In this paper, the principle of  NOMA has been  applied to MEC, and   optimal solutions for the power and time allocation have been obtained  by applying GP.    Analytical and simulation results   have also been  provided to demonstrate the superior performance of NOMA-MEC over OMA-MEC.  
\vspace{-1em}
\appendices
\section{Proof of Lemma \ref{lemma1}}
The proof of the lemma can be completed by studying the possible choices of $\lambda_i$, $i=1, 2,3$, and showing that the solutions for the case with $\lambda_i=0$, $\forall i\in\{1,2\}$, yield the smallest energy consumption.  
\subsubsection{Hybrid NOMA ($\lambda_i=0$, $\forall i\in\{1,2\}$)} since $\lambda_i=0$, $\forall i\in\{1,2\}$,  $y_i>0$ and hence $P_{n,1}$ and $P_{n,2}$ are non-zero, which is the reason why this case is termed hybrid NOMA. For this case, we can show that $\lambda_3\neq 0$ as follows. If $\lambda_3=0$, the KKT conditions lead to the following two equations: 
  \begin{eqnarray}
\left\{\begin{array}{rl}  
  \frac{D_me^{\frac{N}{D_m}} e^{y_1} }{D_me^{\frac{N}{D_m}} e^{y_1}  +T_ne^{y_2}} &=0 \\  
    \frac{T_ne^{y_2} }{D_me^{\frac{N}{D_m}} e^{y_1}  +T_ne^{y_2}}  &=0
     \end{array}\right.,
\end{eqnarray}
which cannot be true. Therefore,   $\lambda_3\neq 0$ follows, which means that the KKT conditions can be rewritten as follows: 
  \begin{eqnarray}
\left\{\begin{array}{rl}  
  \frac{ e^{\frac{N}{D_m}} e^{y_1} }{D_me^{\frac{N}{D_m}} e^{y_1}  +T_ne^{y_2}} -\lambda_3  &=0 \\  
    \frac{ e^{y_2} }{D_me^{\frac{N}{D_m}} e^{y_1}  +T_ne^{y_2}} -\lambda_3  &=0
    \\  
      \left( -D_my_1-T_ny_2+N  \right)&=0\\ 
      y_i&> 0, \forall i \in\{1,2\}
     \end{array}\right..
\end{eqnarray}
With some algebraic manipulations, the optimal solutions for $y_1$ and $y_2$ can be obtained as follows: 
  \begin{eqnarray}\label{13}
\left\{\begin{array}{rl}  
y_1^* =& \frac{N(D_m-T_n)}{D_m(D_m+T_n)}\\ 
y_2^* =& \frac{N(D_m-T_n)}{D_m(D_m+T_n)} +\frac{N}{D_m}
     \end{array}\right..
\end{eqnarray}
Since $D_n<2D_m$, $T_n\leq D_n-D_m<D_m$, and the solutions   $y_i^*$'s satisfy the constraints $y_i>0$, which mean that the solutions shown in \eqref{13} are feasible. 
With the power allocation solutions in \eqref{13}, the overall energy consumption is given by
\begin{align}
E_{\text{H-NOMA}} =& D_m |h_n|^{-2}e^{\frac{N}{D_m}}\left(e^{\frac{N(D_m-T_n)}{D_m(D_m+T_n)}}-1\right) \\\nonumber &+T_n |h_n|^{-2}\left(e^{\frac{N(D_m-T_n)}{D_m(D_m+T_n)} +\frac{N}{D_m}}-1\right). 
\end{align}
%

\subsubsection{Pure NOMA ($\lambda_1=0$ and $\lambda_2\neq 0$)} since $\lambda_1=0$ and $\lambda_2\neq 0$, we have $y_1\neq0$ and $y_2=0$, and hence $P_{n,1}\neq0$ and $P_{n,2}=0$, which is the reason to term this case   pure NOMA. Since $y_2=0$ corresponds to an extreme situation in which all the power is allocated to $D_m$,  the use of the rate constraint in \eqref{rate constraint} yields the following choice of $P_{n,1}$:
\begin{align}
\tilde{P}_{n,1}^{*} = \left(e^{\frac{N}{D_m}}-1\right) e^{\frac{N}{D_m}} |h_n|^{-2},
\end{align} 
which means that the overall energy consumption becomes 
\begin{align}\label{l2=0}
E_{\text{NOMA}}= D_m\left(e^{\frac{N}{D_m}}-1\right) e^{\frac{N}{D_m}} |h_n|^{-2}. 
\end{align}

\subsubsection{OMA ($\lambda_1\neq0$ and $\lambda_2= 0$)} since $\lambda_1\neq0$ and $\lambda_2= 0$, we have $y_1=0$ and $y_2\neq 0$, and hence $P_{n,1}=0$ and $P_{n,2}\neq0$, which is the reason to term this case as OMA. Since  all the power is allocated to $T_n$, the use of  the rate constraint in \eqref{rate constraint} yields the following choice of $P_{n,2}$:
\begin{align}
\tilde{P}_{n,2}^* = \left(e^{\frac{N}{T_n}}-1\right)   |h_n|^{-2},
\end{align} 
which means that the overall energy consumption becomes 
\begin{align}
E_{\text{OMA}}=T_n \left(e^{\frac{N}{T_n}}-1\right)   |h_n|^{-2}. 
\end{align}

\subsubsection{Comparisons among the three cases} in the following, we can show that hybrid NOMA requires  the smallest energy. As discussed in Subsection \ref{subsection t}, the overall energy is a monotonically non-increasing function of $T_n$ when $\lambda_i=0$, $\forall i \in\{1,2\}$. Therefore, $E_{\text{H-NOMA}}$   is upper bounded by 
\begin{align}\label{l2}
E_{\text{H-NOMA}} \leq& D_m |h_n|^{-2}e^{\frac{N}{D_m}}\left(e^{\frac{N}{D_m}}-1\right) = E_{\text{NOMA}},
\end{align}
since $T_n\geq0$. Hence,  the use of hybrid NOMA requires  less energy consumption than pure NOMA. 

The difference between $E_{\text{H-NOMA}}$ and $E_{\text{OMA}}$ can be expressed as follows:
\begin{align}
\frac{E_{\text{H-NOMA}}- E_{\text{OMA}}}{|h_n|^{-2}}=D_m e^{\frac{N}{D_m}}\left(e^{\frac{N(D_m-T_n)}{D_m(D_m+T_n)}}-1\right) \\\nonumber +T_n  \left(e^{\frac{2N}{(D_m+T_n)} }-1\right)-T_n \left(e^{\frac{N}{T_n}}-1\right)   
= f_{T_n}(T_n),
\end{align}
where $f_{T_n}(x)$ is defined as follows: 
\begin{align}
f_{T_n}(x)\triangleq (D_m +x)e^{\frac{2N }{(D_m+x)}}-D_m e^{\frac{N}{D_m}}   -x  e^{\frac{N}{x}} . 
\end{align}
Note that $f_{T_n}(x)$ is a monotonically non-decreasing  function for $x<D_m$, as shown in the following. The derivative of $f_{T_n}(x)$ is given by
\begin{align}
\frac{df_{T_n}(x)}{dx}   =&e^{\frac{2N}{D_m+x}} \left(1-\frac{2N}{D_m+x}\right) - e^{\frac{N}{x}}\left(1 - \frac{N}{x}\right). 
\end{align}
Now define $f_y(y) = e^{\frac{N}{y}}\left(1 - \frac{N}{y}\right)$, and the derivative $f_{T_n}(x)$ can be expressed as follows:
\begin{align}
\frac{df_{T_n}(x)}{dx}   =& f_y\left(\frac{D_m+x}{2}\right)-f_y\left(x\right). 
\end{align}
Note that $f_y(y)$ is a monotonically  increasing function since $\frac{df_y(y)}{dy}=\frac{N^2e^{\frac{N}{y}}}{y^3}> 0$. Since $x<D_m$,   $\frac{D_m+x}{2}> x$.   Therefore,  the derivative $f_{T_n}(x)$ is non-negative, i.e.,
\begin{align}
\frac{df_{T_n}(x)}{dx}   =&f_y\left(\frac{D_m+T_n}{2}\right) -f_y(T_n)\geq 0,
\end{align} 
which means that $f_{T_n}(x)$ is a monotonically non-decreasing  function. Since $T_n<D_m$, we   have  
\begin{align} \label{l1}
\frac{E_{\text{H-NOMA}}- E_{\text{OMA}}}{|h_n|^{-2}} =f_{T_n}(T_n) \leq  & f_{T_n}(D_m)=0. 
\end{align}
Combining \eqref{l2} and \eqref{l1}, hybrid NOMA, i.e., the solutions obtained with $\lambda_i=0$, $\forall i\in\{1,2\}$, yields the smallest energy consumption. By using $y_i^*$ in \eqref{13}, the required powers during $D_m$ and $T_n$ can be obtained, and the proof is complete.
 \vspace{-1em}
   \bibliographystyle{IEEEtran}
\bibliography{IEEEfull,trasfer}

  \end{document}